\newtheorem{theorem}{Theorem}
\newtheorem{lemma}{Lemma}
\newtheorem{proposition}{Proposition}
\newtheorem{condition}{Planarity Condition}
\theoremstyle{definition}
\newtheorem*{remark}{Remark}
\definecolor{dark-blue}{rgb}{0.15,0.15,0.4}
\definecolor{dark-red}{rgb}{0.4,0.15,0.15}
\definecolor{medium-red}{rgb}{0.6,0,0}
\definecolor{medium-blue}{rgb}{0,0,0.6}
\begin{document}
\title{Four-body central configurations with one pair of opposite sides parallel}
\author{Manuele Santoprete\thanks{ Department of Mathematics, Wilfrid Laurier
University E-mail: msantopr@wlu.ca}} 
 \maketitle

\begin{abstract}
    We study  four-body central configurations with one pair of opposite sides parallel. We use a novel constraint to write the  central configuration equations in this special case, using distances as variables.  We prove that, for a given ordering of the mutual distances, a trapezoidal central configuration must have a  certain partial ordering of the masses. We also  show that if opposite masses of a four-body trapezoidal central configuration are equal, then the configuration has a line of symmetry and it must be a  kite. In contrast to the general four-body case, we show that if  the  two adjacent masses bounding the shortest side  are equal, then the configuration must be  an isosceles trapezoid, and the remaining two masses must also be equal.
  \end{abstract}

\renewcommand{\thefootnote}{\alph{footnote})}
\tableofcontents

\section{Introduction}
Let $ P _1,P _2, P _3$, and $P _4 $ be  four points in $\mathbb{R}^3$ with position vectors $\mathbf{q} _1 , \mathbf{q} _2 , \mathbf{q} _3  $, and $ \mathbf{q} _4 $, respectively. Let $ r _{ ij } = \| \mathbf{q} _i - \mathbf{q} _j \| $, be the distance between the point $ P _i $ and $ P _j $, and let $ \mathbf{q} = (\mathbf{q} _1 , \mathbf{q} _2 , \mathbf{q} _3 , \mathbf{q} _4) \in \mathbb{R}  ^{ 12 }$.
The center of mass of the system is $ \mathbf{q} _{ CM } = \frac{ 1 } { M} \sum _{ i = 1 } ^n m _i \mathbf{q} _i $, where $ M = m _1 + \ldots m _n $ is the total mass. 
The Newtonian $4$-body problem concerns the motion of $4$ particles
with masses $m_i\in{\mathbb R}^+$ and positions $\mathbf{q} _i\in{\mathbb
R}^3$, where $i=1,\ldots,4$. The motion is governed by Newton's
law of motion 
\begin{equation} 
 m _i\mathbf{\ddot q} _i=   \sum _{ i \neq j } \frac{m _i m _j (\mathbf{q} _j - \mathbf{q} _i)   } { r _{ ij } ^3 }=\frac{\partial  U}{  \partial  \mathbf{q}  _i}, \quad 1\leq i\leq 4
\end{equation} 
where $U(\mathbf{q} )$ is the Newtonian potential
\begin{equation} 
U( \mathbf{q} )=\sum_{i<j}\frac{m_im_j}{r_{ij}},\quad 1\leq i\leq 4.
\end{equation} 
A {\it central configuration} (c.c.) of the four-body problem is a configuration $ \mathbf{q} \in \mathbb{R} ^{ 12} $ which satisfies the algebraic equations
\begin{equation} \label{eqn:cc1}
 \lambda\, m _i (\mathbf{q} _i -\mathbf{q} _{ CM })  = \sum _{ i \neq j } \frac{ m _i m _j (\mathbf{q} _j - \mathbf{q} _i) } { r _{ ij } ^3 }, \quad 1 \leq i \leq n .
 \end{equation} 
If we let $ I (\mathbf{q}) $ denote the moment of inertia, that is,
\[ I (\mathbf{q}) = \frac{1}{2} \sum _{ i = 1 } ^n m _i \| \mathbf{q} _i - \mathbf{q} _{ CM } \| ^2 = \frac{ 1}{2 M} \sum _{ 1 \leq i < j \leq n } ^n  m _i m _j r _{ ij } ^2, 
\]
    we can write equations \eqref{eqn:cc1} as 
    \begin{equation}\label{eqn:cc3}  \nabla U (\mathbf{q}) = \lambda \nabla I (\mathbf{q}).  
    \end{equation}
   Viewing $ \lambda $  as a Lagrange multiplier, a central configuration is simply a critical point of $U $
subject to the constraint $I$ equals a constant.

A central configuration is {\it planar} if the four points $ P _1,P _2, P _3$, and $P _4 $  lie on the same plane. Equations \eqref{eqn:cc1}, and \eqref{eqn:cc3}  also describe planar central configurations provided $ \mathbf{q} _i \in \mathbb{R}^2  $ for $ i = 1, \ldots 4 $. We say that a planar configuration is {\it degenerate} if two or more points coincide, or if more than two points lie on the same line. 
Non-degenerate planar configurations can be classified as either {\it concave} or {\it convex}. A concave configuration has one point which is located strictly inside the convex hull of the other three, whereas a convex configuration does not have a point contained in the convex hull of the other three points. Any  convex configuration determines a convex quadrilateral (for a precise definition  of quadrilateral see for example \cite{behnke1974fundamentals}). 
In a planar convex configuration we say that the points are {\it ordered sequentially} if they are numbered consecutively while traversing the boundary of the corresponding convex  quadrilateral.
In this paper we are interested in studying {\it trapezoidal central configurations}, that is, those c.c.'s for which  two of the opposite sides are parallel  (see Figure \ref{fig:trapez}). Non-degenerate trapezoidal central configurations are necessarily convex.  
 \begin{figure}
     \begin{center}
      \scalebox{.8}{
 \begin{tikzpicture}

\draw (-6,0) -- (6,0);
\draw (-6,5.5) -- (6,5.5);

\fill (4,0) circle (0.1cm);
\fill (-4,0) circle (0.1cm);
\fill (-3.,5.5) circle (0.1cm);
\fill (2,5.5) circle (0.1cm);

\draw[very thick] (-4,0)--(4,0);
\draw[very thick] (-4,0)--(-3,5.5);
\draw[very thick] (-3,5.5)--(2,5.5);
\draw[very thick] (-4,0)--(-3,5.5);
\draw[very thick] (2,5.5)--(4,0);

\node[] at (4,-0.4) {$m _1 $};
\node[] at (-4,-0.4) {$m _2 $};
\node[] at (-3,5.8) {$m _3 $};
\node[] at (2,5.8) {$m _4 $};

\end{tikzpicture}
}\end{center}
\caption{An example of a trapezoidal   central configuration.\label{fig:trapez}}
\end{figure}
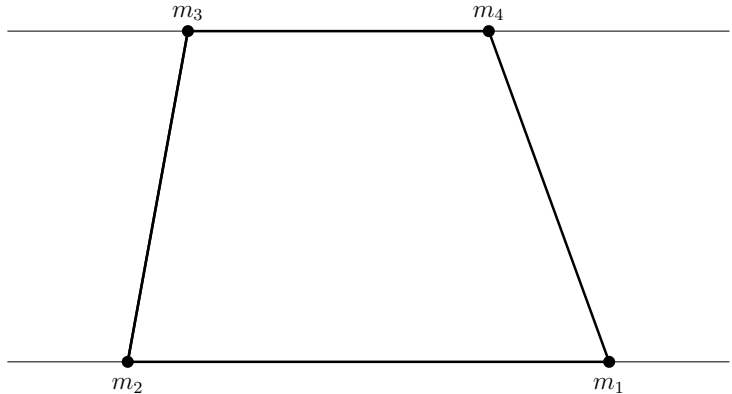


The four body problem has a  long and distinguished history. In 1900 Dziobek derived equations for central configurations of four bodies with distances as variables \cite{dziobek1900uber}. In 1932 McMillan and Bartky  used similar equations to obtain many important new results  \cite{macmillan1932permanent}. In 1996 Albouy \cite{albouy1995symetrie,albouy1996symmetric} gave a complete classifications of the four-body c.c.'s with equal masses. More recently, in 2006  Hampton and Moeckel \cite{hampton2006finiteness} proved the finiteness of the number of c.c.'s. Other recent results of note, concerning  four-body c.c.'s in the case some of the masses are equal, were obtained by  the present author and Perez-Chavela,  \cite{perez2007convex}, Albouy, Fu and Sun \cite{albouy2008symmetry},  and Fernandes, Llibre and  Mello \cite{fernandes2017convex}.
Further   results for the four-body were recently attained by Cors and Roberts \cite{cors2012four}, Corbera, Cors and Roberts \cite{corbera2016four}, Deng,   Li and Zhang \cite{deng2017four,deng2017some} and Xie \cite{xie2012isosceles}, just to mention a few. Particularly important for this paper is the work of Cors and Roberts \cite{cors2012four} which inspired for the approach we take here. 
Additionally, certain bifurcations in the four-body problem, and several planarity conditions and their applications to four-body c.c.'s were obtained by the present author in  \cite{rusu2016bifurcations} and  \cite{santoprete2017planarity}, respectively. 

Let $ \mathbf{r}= (r _{ 12 } , r _{ 13 } , r _{ 14 } , r _{ 23 } , r _{ 24 } , r _{ 34 }) \in (\mathbb{R}  ^{ + }) ^{ 6 } $ be a vector of mutual distances. The conditions for which such vector determines a realizable configuration of four bodies in Euclidean space  
can be expressed by the   Cayley-Menger criterion, that we state below. The Cayley-Menger determinant of four points  $ P _1 , \ldots P _4 $   is
\[   H (\mathbf{r} ) = \begin{vmatrix}
            0 & 1 & 1 & 1 & 1 \\
            1 & 0 & r^2 _{ 12 } & r^2 _{ 13 } & r^2 _{ 14 }  \\
            1 & r ^2_{ 12 } & 0 & r^2 _{ 23 } & r^2 _{ 24 } \\
            1 & r ^2_{ 13 } & r^2 _{ 23 } & 0 & r^2 _{ 34 } \\
            1 & r^2 _{ 14 } & r^2 _{ 24 } & r^2 _{ 34 } & 0.
        \end{vmatrix}. 
\]
A configuration is geometrically realizable if and only if the Cayley-Menger determinant of each subconfiguration of two or more points is $ \geq 0 $ when the number of points is even, and $ \leq 0 $ when it is odd.  See, for instance, the book of  Blumenthal  \cite{blumenthal_theory_1970} or  Theorem 9.7.3.4 and Exercise 9.14.23 in \cite{berger2009geometry}. Note that an  equivalent characterization can be given in terms of  Borchardt's quadratic form, see  \cite{albouy2006mutual,schoenberg_metric_1938}. 
 In the remainder of this paper we assume that $ \mathbf{r} $ is geometrically realizable. 

In the four-body problem the mutual distances are not independent  so that  describing  planar four-body central configurations requires an additional constraint. Following Dziobek \cite{dziobek1900uber} it is customary to use the following   planarity condition,
\begin{condition}
 $ P _1,P _2, P _3, P _4 \in \mathbb{R}^3$  are coplanar  if and only if the Cayley-Menger determinant  determined by  these four points is $ 0 $, that is, $ H (\mathbf{r}) = 0 $.
\end{condition}
In this paper we  use a different constraint that not only gives planarity of the configuration, but also restricts the configuration to be trapezoidal. This planarity conditions complements the list given in \cite{santoprete2017planarity}. Our approach parallels the treatment of the co-circular for body problem given by Cors and Roberts in \cite{cors2012four}. The new constraint is introduced in Section 2.
In Section 3 we derive the equations for the trapezoidal central configurations. 
In Section 4 we study the relationship between the Cayley-Menger constraint and the constraint used in the paper, and show that, as expected, the gradients of these restrictions are collinear at trapezoidal configurations. 
In Section 5 we prove that, for a given ordering of the mutual distances, a trapezoidal central configuration must have a  certain partial ordering of the masses. This result is by necessity weaker than the analogous result for co-circular configurations where one obtains a total ordering (see \cite{cors2012four}). We also  prove that if opposite masses of a four-body trapezoidal central configuration are equal, then the configuration has a line of symmetry and is a kite. This is a special case of the well known result of Albouy, Fu and Sun \cite{albouy2008symmetry}. A similar result also holds in  the case the  two adjacent masses bounding the shortest side  are equal. In this case the configuration is an isosceles trapezoid, and the remaining two masses must also be equal. Finally, we show that, in contrast to the co-circular case, when    the  two adjacent masses bounding the longest side  are equal there are asymmetric solutions. 
\section{Another Planarity Conditions}
Let $ P _1,P _2, P _3$, and $P _4 $ be  four points in $\mathbb{R}^3$ and let  $\mathbf{q} _1 , \mathbf{q} _2 , \mathbf{q} _3  $, and $ \mathbf{q} _4 $ be their position vectors. 
In this section we introduce a planarity condition that also  constrains the configuration to have one pair of
opposite sides parallel.
Let  
\begin{align*} 
    \mathbf{a}&  = \mathbf{q}  _2 - \mathbf{q}  _1 ,~ 
    \mathbf{b}  = \mathbf{q}  _3 - \mathbf{q}  _2 , ~
     \mathbf{c}   = \mathbf{q}  _4 - \mathbf{q}  _3 , \\
     \mathbf{d}  & = \mathbf{q}  _1 - \mathbf{q}  _4 , ~ 
     \mathbf{e}  = \mathbf{q}  _3 - \mathbf{q}  _1 , ~
    \mathbf{f}  = \mathbf{q}  _4 - \mathbf{q}  _2,    
\end{align*} 
then it follows that  $ \mathbf{a} + \mathbf{b} + \mathbf{c} + \mathbf{d}= 0 $,  $ \mathbf{f} = \mathbf{b} + \mathbf{c} $, and $ \mathbf{e} = \mathbf{a} + \mathbf{b} $, see figure \ref{fig:tetrahedron}. For convenience, we will also use  $a, b,c ,d, e, f $ to denote the mutual distances:
\[ a = r _{ 12 } ,\quad b = r _{ 23 }, \quad c = r _{ 34 }, \quad d = r _{ 14 },\quad  e = r _{ 13 } ,\quad f = r _{ 24 }.\]
\begin{figure}[t]\begin{center}
\begin{tikzpicture}[thick]
\clip (-2.5,-2.5) rectangle (3.5,3.6);

\draw[very thick,color=black,>=latex,->] (1,3) --   (-2,0)node[midway,above] {$ \mathbf{a}$  };
\draw[very thick,color=black,>=latex,->] (-2,0) -- (0,-2) node[midway,below] {$\mathbf{b}$ };
\draw[very thick,color=black,>=latex,<-] (3,0) -- (0,-2) node[midway,below] {$\mathbf{c}$ };

\draw[very thick,color=black,>=latex,<-] (1,3) -- (3,0) node[midway,above] {$\mathbf{d}$ };
\draw[very thick,color=black,>=latex,->] (1,3) -- (0,-2) node[midway,left] {$\mathbf{e}$ };
\draw[very thick,dashed,color=black,>=latex,->] (-2,0) -- (2.9,0) node[midway,right] {$\mathbf{f}$ };
\draw[color=black,fill=black] (1,3) circle (0.07)  node[above] {$ P _1 $ };
\draw[color=black,fill=black] (-2,0) circle (0.07)  node[below] {$ P_2 $ };
\draw[color=black,fill=black] (0,-2) circle (0.07)  node[below] {$ P _3 $ };
\draw[color=black,fill=black] (3,0) circle (0.07)  node[below] {$ P _4 $ };
\end{tikzpicture}
\end{center}
\caption{The points $ P _1, P _2 , P _3$ , and $P _4$  form a tetrahedron in $ \mathbb{R}  ^3 $.\label{fig:tetrahedron}}
\end{figure}
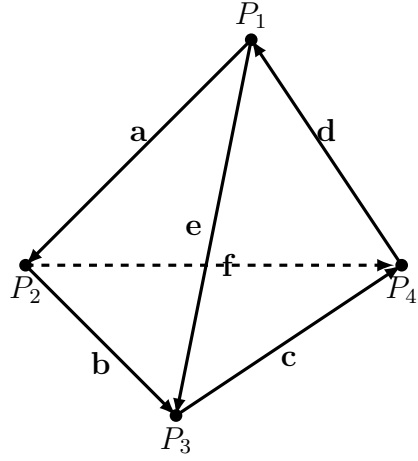

In the following lemma we introduce the quantity $ \Delta $ that will be shown to be of great significance for this work.
\begin{lemma}\label{lemma:ef}
    Let $ \Delta = \frac{1}{2} \| \mathbf{a} \times \mathbf{c} \| $, then, with the above definitions, the following equation holds
    \[
        4\Delta  ^2 = a ^2 c ^2 - \frac{1}{4} (b ^2 + d ^2 - e ^2 - f ^2 ) ^2.    \]
\end{lemma}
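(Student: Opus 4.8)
The plan is to reduce everything to a single scalar identity via the Gram (Lagrange) identity for the cross product. Since $\Delta=\tfrac12\|\mathbf{a}\times\mathbf{c}\|$, we have $4\Delta^2=\|\mathbf{a}\times\mathbf{c}\|^2=\|\mathbf{a}\|^2\|\mathbf{c}\|^2-(\mathbf{a}\cdot\mathbf{c})^2=a^2c^2-(\mathbf{a}\cdot\mathbf{c})^2$. Comparing with the target identity, the whole lemma collapses to the claim that $(\mathbf{a}\cdot\mathbf{c})^2=\tfrac14(b^2+d^2-e^2-f^2)^2$; equivalently, that $\mathbf{a}\cdot\mathbf{c}=\tfrac12(b^2+d^2-e^2-f^2)$ (up to sign, which is harmless once squared). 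So the only real content is to express the single dot product $\mathbf{a}\cdot\mathbf{c}$ purely in terms of the six mutual distances.

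To do that I would use the linear relations among the edge vectors to rewrite $\mathbf{c}$ using vectors that share an endpoint with $\mathbf{a}$, and then convert each resulting dot product into distances by polarization. Concretely, from $\mathbf{f}=\mathbf{b}+\mathbf{c}$ I write $\mathbf{c}=\mathbf{f}-\mathbf{b}$, so $\mathbf{a}\cdot\mathbf{c}=\mathbf{a}\cdot\mathbf{f}-\mathbf{a}\cdot\mathbf{b}$. The term $\mathbf{a}\cdot\mathbf{b}$ comes directly from $\mathbf{e}=\mathbf{a}+\mathbf{b}$, which upon squaring gives $\mathbf{a}\cdot\mathbf{b}=\tfrac12(e^2-a^2-b^2)$. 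For $\mathbf{a}\cdot\mathbf{f}$ I observe that $\mathbf{a}+\mathbf{d}=\mathbf{q}_2-\mathbf{q}_4=-\mathbf{f}$, so $\|\mathbf{a}+\mathbf{d}\|^2=f^2$ yields $\mathbf{a}\cdot\mathbf{d}=\tfrac12(f^2-a^2-d^2)$; then, since $\mathbf{f}=-(\mathbf{a}+\mathbf{d})$, I get $\mathbf{a}\cdot\mathbf{f}=-a^2-\mathbf{a}\cdot\mathbf{d}=\tfrac12(d^2-a^2-f^2)$.

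Substituting these gives $\mathbf{a}\cdot\mathbf{c}=\tfrac12(d^2-a^2-f^2)-\tfrac12(e^2-a^2-b^2)=\tfrac12(b^2+d^2-e^2-f^2)$, after the $a^2$ terms cancel. Squaring and inserting into the Gram identity then produces the stated formula. I do not expect a genuine obstacle here; the calculation is elementary. The only thing requiring care is the bookkeeping: choosing a route through the relations $\mathbf{e}=\mathbf{a}+\mathbf{b}$, $\mathbf{f}=\mathbf{b}+\mathbf{c}$, and $\mathbf{a}+\mathbf{b}+\mathbf{c}+\mathbf{d}=0$ so that every auxiliary inner product is removable by a single polarization step and the extraneous terms cancel cleanly. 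As a consistency check I would note that the final expression $b^2+d^2-e^2-f^2$ respects the natural symmetry of the configuration (the roles of $b,d$ versus $e,f$), which guards against sign errors.
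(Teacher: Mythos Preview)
Your proof is correct and follows essentially the same approach as the paper: both apply the Lagrange identity to reduce to computing $\mathbf{a}\cdot\mathbf{c}$, and then express this dot product in terms of the mutual distances via the linear relations among the edge vectors. Your route to the formula $2\,\mathbf{a}\cdot\mathbf{c}=b^2+d^2-e^2-f^2$ (via $\mathbf{c}=\mathbf{f}-\mathbf{b}$ and $\mathbf{a}+\mathbf{d}=-\mathbf{f}$) is in fact more direct than the paper's longer chain of substitutions, but the underlying idea is identical.
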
 
\begin{proof} 
 Clearly,
\[
    4\Delta ^2  = (\mathbf{a} \times \mathbf{c}) \cdot (\mathbf{a} \times \mathbf{c}) =  (\mathbf{a} \cdot \mathbf{a }) (\mathbf{c} \cdot \mathbf{c}) -  (\mathbf{a} \cdot \mathbf{c} ) ^2 = a ^2 c ^2 -  (\mathbf{a} \cdot \mathbf{c}) ^2.       
\]
But 
\begin{align*}
   2 (\mathbf{a} \cdot \mathbf{c}) & = 2\cdot (\mathbf{f} + \mathbf{d} ) \cdot (\mathbf{d} + \mathbf{e})  =  2 \mathbf{d} \cdot (\mathbf{f} + \mathbf{d}) + 2 \mathbf{e} \cdot (\mathbf{f} + \mathbf{d}) \\ 
   & = 2 
   \mathbf{d}\cdot  (\mathbf{b} - \mathbf{e}) + 2 \mathbf{e} \cdot (\mathbf{f} + \mathbf{d}) = 2\, \mathbf{b} \cdot \mathbf{d} - 2 \,\mathbf{e} \cdot \mathbf{f}\\   
   &= (\mathbf{d} + \mathbf{b}) \cdot (\mathbf{d} + \mathbf{b}) - \mathbf{d} \cdot \mathbf{d} - \mathbf{b} \cdot \mathbf{b} - (\mathbf{e} - \mathbf{f}) \cdot (\mathbf{e} - \mathbf{f}) + \mathbf{e} \cdot \mathbf{e} + \mathbf{f} \cdot \mathbf{f} \\
   & = e ^2 + f ^2 - d ^2 - b ^2 + (\mathbf{d} + \mathbf{b})\cdot  (\mathbf{d} + \mathbf{b}) - ( \mathbf{b} + 2 \mathbf{c} + \mathbf{d} ^2 \\
& = e ^2 + f ^2 - d ^2 - b ^2 - \mathbf{c}  \cdot (\mathbf{b} + \mathbf{d}) - 4 \mathbf{c} \cdot \mathbf{c} \\
& = e ^2 + f ^2 - d ^2 - b ^2 - 4 c ^2 + 4 \mathbf{c} \cdot (\mathbf{a} + \mathbf{c}) \\
 & = e ^2 + f ^2 - d ^2 - b ^2 + 4 \mathbf{a} \cdot \mathbf{c}.
\end{align*}  
It follows that 
\[
    2 (\mathbf{a} \cdot \mathbf{c}) = d ^2 + b ^2 - e ^2 - f ^2.  
\]
Hence, 
\[
    4\,\Delta  ^2 = a ^2  c ^2 - \frac{1}{4} (e ^2 + f ^2 - b ^2 - d ^2 ) ^2.  
\] 
\end{proof}

In the case of a planar configuration $ \Delta $ can be interpreted as the absolute value of the 
difference of the areas of the triangles whose bases are the sides $ \mathbf{b} $ and $\mathbf{d}  $ of a convex  quadrilateral, and whose vertices coincide with the intersection of the diagonals (see \cite{hobson2004treatise} page 208). Note that $ \Delta $ can also be viewed as the area of a crossed quadrilateral (see \cite{coxeter1967geometry}).

There are two ways to obtain a planarity condition from this. One is to impose that $ \Delta $ is equal to the absolute value of the difference of the areas  $ A _3 $  and $ A _4 $ (or the absolute value of the  difference between $ A _1 $ and $ A _2 $). Here $ A _i $ is the area of the triangle whose vertices contain all bodies except for the $ i $-th body. 
The second approach, which is the one we take here, is to impose that $ \Delta = 0 $

\begin{condition}\label{cond:planarity1}
    Suppose $ \Delta = \frac{1}{2} \| \mathbf{a} \times \mathbf{c} \|$. Then,
    $ \Delta = 0 $ if and only if $ \mathbf{a} $ and $ \mathbf{c} $ are parallel and the configuration is planar. 
\end{condition}
\begin{proof}
   Clearly, if $ \| \mathbf{a} \times \mathbf{c} \|  = 0 $  the vectors $ \mathbf{a} $ and $ \mathbf{c} $ are parallel, in which case the configuration is planar because the four points lie on two parallel lines. 
Conversely, if the configuration is planar with $ \mathbf{a} $ and $ \mathbf{c} $ parallel, then $ \| \mathbf{a} \times \mathbf{c} \|  = 0 $. 
\end{proof} 

Note that the above condition can be written explicitly in terms of mutual distances as $a ^2  c ^2 = \frac{1}{4} (e ^2 + f ^2 - b ^2 - d ^2 ) ^2 $, or 
\begin{equation} \label{eqn:trapezoid0}   (2ac+e^2+f ^2-b^2-d^2) (2ac-e^2-f ^2+b^2+d^2) =0.     
\end{equation} 
 For the remainder of this paper we will assume that any trapezoidal configuration  satisfying the planarity condition above is  ordered sequentially so that $ r _{ 12 } , r _{ 3 4 } $ are the lengths of the bases of the trapezoid, $ r _{ 23 } $ and $ r _{ 14 }  $ are the lengths of legs,   and $ r _{ 13 } $ and $ r _{ 24 } $ are the lengths of the diagonals. In this case one has 
 \begin{equation}\label{eqn:trapezoid}
     (2ac-e^2-f ^2+b^2+d^2) =0,     
\end{equation} 
which  is known as  a necessary and sufficient condition for a convex quadrilateral with consecutive sides
 $ a,b,c,d $ and diagonals $e,f $  to be  a trapezoid with parallel sides $a $ and $c $. See for example \cite{josefsson2013characterizations}.

To double check that for realizable configurations  equation  \eqref{eqn:trapezoid} implies planarity, we proceed as follows. Substituting 
$ e^2=2ac-f ^2+b^2+d^2$ into the Cayley-Menger determinant yields 
\[-2(r_{12}^2r_{34}+r_{12}r_{23}^2-r_{12}r_{24}^2-r_{12}r_{34}^2-r_{14}^2r_{34}+r_{24}^2r_{34})^2 
    \leq 0.\]
By the Cayley-Menger criterion, this implies that for the mutual distances vector $ \mathbf{r} $ to correspond to a realizable configuration 
one must have 
\[r_{12}^2r_{34}+r_{12}r_{23}^2-r_{12}r_{24}^2-r_{12}r_{34}^2-r_{14}^2r_{34}+r_{24}^2r_{34}=0,\]
which in turn  implies that $ H (\mathbf{r}) = 0 $, and  leads to the formulas \eqref{eqn:diagonal1} and \eqref{eqn:diagonal2} for the diagonals of a trapezoid.

We remark that if one imposes  Ptolemy's condition  to study co-circular configurations, as done in \cite{cors2012four}, it is possible to see that  any realizable configuration satisfying Ptolemy's must be planar as a consequence of the   Cayley-Menger criterion. 
\section{Planarity Condition  and c.c equations }
In this section we give a derivation of the trapezoidal c.c.'s equations that mirrors the approach of Cors and Roberts \cite{cors2012four} for the co-circular problem. 
From  Planarity Condition \ref{cond:planarity1}, it follows that if we are looking for   planar central configurations  with opposite sides parallel, then  we can impose the  condition $ F = 4\Delta ^2 = 0 $.
Hence, we have the following proposition

\begin{proposition}  
    Assuming the bodies are sequentially ordered, a  trapezoidal central configuration is a critical point of the function 
    \begin{equation}
U+\lambda  M(I-I_0)+\sigma F.
\label{conditionSaari}
\end{equation}
satisfying $I-I_0=0$,  and  $ F =0 $, where  $\lambda $  and 
$ \sigma $ are Lagrange multipliers.
\end{proposition}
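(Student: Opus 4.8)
The plan is to read the statement as an instance of the method of Lagrange multipliers, after recasting the two defining features of a trapezoidal central configuration as constraints on the six mutual distances $a,b,c,d,e,f$. Working in distance space is legitimate because four points in $\mathbb{R}^3$ have exactly six independent mutual distances --- the twelve Cartesian coordinates modulo the six-dimensional group of rigid motions --- so these distances are free coordinates on the reduced configuration space, and $U$, $I$, and $F$ are all naturally functions of them.

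First I would record the constraints. By \eqref{eqn:cc3} a central configuration is a critical point of $U$ subject to $I = I_0$; writing $MI = \frac12\sum_{i<j}m_im_jr_{ij}^2$ explains the factor $M$ attached to the multiplier $\lambda$, since then $\partial(MI)/\partial r_{ij}=m_im_jr_{ij}$ and the equations stay clean. The trapezoidal shape is encoded by $F=4\Delta^2=0$: by Lemma \ref{lemma:ef} this is an explicit polynomial in the distances, and by Planarity Condition \ref{cond:planarity1} its vanishing forces $\mathbf{a}\parallel\mathbf{c}$ together with planarity, so the single equation $F=0$ simultaneously enforces coplanarity and the trapezoid condition. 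The sequential-ordering hypothesis is what selects the factor \eqref{eqn:trapezoid} of \eqref{eqn:trapezoid0}, ensuring $F=0$ describes a genuine convex trapezoid with bases $r_{12},r_{34}$ rather than a crossed quadrilateral.

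With both features phrased as constraints, a trapezoidal central configuration is a critical point of $U$ restricted to the set $\{I-I_0=0\}\cap\{F=0\}$. Applying Lagrange multipliers then produces scalars $\lambda,\sigma$ with $\nabla U+\lambda M\nabla I+\sigma\nabla F=0$ at the configuration, which is precisely the assertion that it is a critical point of $U+\lambda M(I-I_0)+\sigma F$ satisfying the two constraints.

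The step demanding the most care --- and the reason the result is not purely formal --- is the justification that $F=0$ may stand in for the usual Dziobek planarity constraint $H(\mathbf{r})=0$. As the computation following \eqref{eqn:trapezoid} shows, substituting the trapezoid relation into the Cayley-Menger determinant leaves $H\le0$, while realizability requires $H\ge0$; hence $\{F=0\}$ and $\{H=0\}$ are tangent along the realizable trapezoids. I would therefore verify that this tangency does not break the multiplier argument, namely that $\nabla I$ and $\nabla F$ are linearly independent at a trapezoidal central configuration so the multipliers exist, and that no planar critical point is lost in passing from $H$ to $F$. The collinearity of $\nabla F$ and $\nabla H$ at trapezoidal configurations, to be established in Section 4, is exactly what guarantees that the tangent spaces of $\{I=I_0,F=0\}$ and $\{I=I_0,H=0\}$ agree there, so that a planar central configuration lying on the trapezoidal locus is automatically a critical point of the displayed Lagrangian.
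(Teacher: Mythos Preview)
Your proposal is correct and follows the same approach as the paper. In fact the paper offers no separate proof of this proposition at all: it is stated immediately after the sentence ``From Planarity Condition~\ref{cond:planarity1}, it follows that if we are looking for planar central configurations with opposite sides parallel, then we can impose the condition $F=4\Delta^2=0$,'' and the justification that replacing $H=0$ by $F=0$ loses nothing is deferred to Section~4 (Proposition~2), exactly as you anticipate.
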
 
Taking derivatives with respect to $ r _{ij}^2 $, and absorbing the $ \frac{1}{2} $ multiple into the Lagrange multiplier $ \sigma $,  we find that  the condition for a planar extrema is  
\begin{equation}\label{eqn:F1} 
m_im_j\left( \lambda -r_{ij}^{-3}\right)+ \sigma \frac{\partial F}{\partial r_{ij} ^2 }=0, \qquad  1\leq i<j\leq 4
\end{equation} 
\[I-I_0=0, \qquad F =0.\]

Writing \eqref{eqn:F1} explicitly yields 
\begin{align} 
m _1 m _2 ( r _{ 12 } ^{ - 3 } - \lambda)  & = \sigma\, r _{ 34 } ^2   &   m _3 m _4 ( r _{ 34 } ^{ - 3 } - \lambda)  & = \sigma\, r _{ 12 } ^2  \\ 
m _1 m _3 (r _{ 13 } ^{ - 3 } - \lambda) & = -\frac{1}{2} \sigma R &  m _2 m _4 (r _{ 24 } ^{ - 3 } - \lambda) & = -\frac{1}{2} \sigma R \\
m _1 m _4 (r _{ 14 } ^{ - 3 } - \lambda) & = \frac{1}{2} \sigma R &  m _2 m _3 (r _{ 23 } ^{ - 3 } - \lambda) & = \frac{1}{2} \sigma R,
\end{align}
where $ R = (r _{ 13 }  ^2 + r _{ 24 }  ^2-r _{ 14 }  ^2 - r _{ 23 }  ^2) $,
 together with $I-I_0=0$ and $F  =0$. Since $ F = 0 $, and we are assuming the ordering of the bodies described in the previous sections, then equation \eqref{eqn:trapezoid} is verified and hence it  follows that $ R = 2 r _{ 12 } r _{ 34 } $. Then, the  previous system of equations takes the form

\begin{align} 
m _1 m _2 ( r _{ 12 } ^{ - 3 } - \lambda)  & = \sigma\, r _{ 34 } ^2   &   m _3 m _4 ( r _{ 34 } ^{ - 3 } - \lambda)  & = \sigma\, r _{ 12 } ^2  \label{eqn:cc_n1}\\ 
m _1 m _3 (r _{ 13 } ^{ - 3 } - \lambda) & = - \sigma  r _{ 12 } r _{ 34 } &  m _2 m _4 (r _{ 24 } ^{ - 3 } - \lambda) & = - \sigma  r _{ 12 } r _{ 34 } \label{eqn:cc_n2}\\
m _1 m _4 (r _{ 14 } ^{ - 3 } - \lambda) & =  \sigma  r _{ 12 } r _{ 34 } &  m _2 m _3 (r _{ 23 } ^{ - 3 } - \lambda) & =  \sigma  r _{ 12 } r _{ 34 }, \label{eqn:cc_n3}
\end{align}

The equations have been grouped in pairs so that when they are multiplied together the product of the right-hand sides is  $  \sigma ^2 r _{ 34 } ^2 r _{ 12 } ^2 $. Consequently, the right hand sides are identical on the configurations satisfying  $ F = 0 $.
This yields the  well-known  relation of Dziobek \cite{dziobek1900uber}
\begin{equation}\label{eqn:dziobek}
   (r _{ 12 }^{ - 3 }  - \lambda) (r _{ 34 } ^{ - 3 } - \lambda) = (r _{ 13 } ^{  - 3 } - \lambda) (r _{ 24 } ^{ - 3 } - \lambda) = (r _{ 14 } ^{ - 3 } - \lambda) (r _{ 23 } ^{ - 3 } - \lambda),        
\end{equation} 
which is required of any planar 4-body central configuration (not only c.c.'s with parallel opposite sides). 

Eliminating $ \lambda $ from equation  \eqref{eqn:dziobek} and factoring gives the important relation
\begin{equation}\label{eqn:relation}
   (r _{ 13 } ^{ 3 } - r _{ 12 } ^3) (r _{ 23 } ^3  - r _{ 34 } ^3 ) (r _{ 24 } ^{ 3 } - r _{ 14 } ^3) = (r _{ 12 } ^3 - r _{ 14 } ^3) (r _{ 24 } ^3 - r _{ 34 } ^3) (r _{ 13 } ^3 - r _{ 23 } ^3).       
\end{equation} 
Assuming the six mutual distances determine an actual configuration in the plane,  this equation is necessary and sufficient for the existence of a  four-body planar central configuration. Further restrictions are needed to ensure that the masses are positive. 

Reasoning as in \cite{cors2012four} it is possible to show that  positivity of the masses implies that  each side of the quadrilateral is shorter in length than either diagonal, and that the shortest exterior side must lie opposite the longest. 
Then, the longest side will be either one of the parallel sides or one of the remaining exterior sides.
In the former case  suppose  $ r _{ 14 } $ is the longest exterior side, then we have that $ r _{ 23 } $ is the  shortest, and thus 
\[|r _{ 14 } - r _{ 23 } |> |r _{ 34 } - r _{ 12 } |.\] However, four lengths can constitute the consecutive sides of a non-parallelogram trapezoid, with $ r _{ 12 } $ and $ r _{ 34 } $ the lengths of the parallel sides, only when
\[|r _{ 14 } - r _{ 23 } |< |r _{ 34 } - r _{ 12 } |< r _{ 14 } + r _{ 23 },\]
which contradicts the previous inequality. 
A similar reasoning shows that $ r _{ 23 } $ cannot be the longest exterior side. 
Hence, in a trapezoidal central configuration, one of the legs cannot be the longest exterior side.

In the latter case, without any loss of generality, we can label the bodies so that  $ r _{ 12 } $   is the longest exterior side-length. Then, positivity of the masses implies that 
\[r _{ 13 } , r _{ 24 } > r _{ 12 } \geq r _{ 14 }, r _{ 23 } \geq r _{ 34 }.\]
With an appropriate relabeling it is also possible to assume $ r _{ 14 } \geq r _{ 23 } $ (see \cite{cors2012four}). This choice imposes   $ r _{ 13 } \geq r _{ 24 } $, and thus 
\begin{equation} \label{eqn:inequalities}
r _{ 13 } \geq r _{ 24 } > r _{ 12 } \geq r _{ 14 } \geq  r _{ 23 } \geq r _{ 34 }.
\end{equation} 
To prove the relation between the diagonals, recall that the lengths of the diagonals in a trapezoid are given by
(see \cite{josefsson2013characterizations}):
\begin{align}
r _{ 13 } = \sqrt{r _{ 12 } r _{ 34 } - \frac{ r _{ 34 } r _{ 23 } ^2 - r _{ 12 } r _{ 14 } ^2 }{r _{ 12 } - r _{ 34 } } }  \label{eqn:diagonal1}\\
r _{ 24 } = \sqrt{r _{ 12 } r _{ 34 } - \frac{ r _{ 34 } r _{ 14 } ^2 - r _{ 12 } r _{ 23 } ^2 }{r _{ 12 } - r _{ 34 } } }\label{eqn:diagonal2}.
\end{align} 
It follows that 
\[ r _{ 13 } ^2 - r _{ 24 } ^2 = \frac{ (r _{ 14 } ^2 - r _{ 23 } ^2) (r _{ 34 } + r _{ 12 }) } { r _{ 12 } - r _{ 34 }} \geq 0 ,   \]
since $ r _{ 12 } > r _{ 34 } $, and $ r _{ 14 } \geq r _{ 23 } $. 

Hence, without loss of generality we can restrict our analysis to the set 
\[
    \Omega = \{ \mathbf{r} \in (\mathbb{R}  ^+)^6 : r _{ 13 } \geq r _{ 24 } > r _{ 12 } \geq r _{ 14 } \geq  r _{ 23 } \geq r _{ 34 } \}.
\]

From the different ratios of two masses that can be derived from  equations(\ref{eqn:cc_n1}-\ref{eqn:cc_n3}), we obtain the following set of equations:
\begin{align}
    \frac{ m _1 } { m _2 } & = -\frac{ r _{ 23 } ^{ - 3 } - r _{ 24 } ^{ - 3 } } { r _{ 13 } ^{ - 3 } - r _{ 14 } ^{ - 3 } }  & \frac{ m _1 } { m _3 } & = \frac{ r _{ 34 } (r _{ 23 } ^{ - 3 } - r _{ 34 } ^{  - 3 })  }{ r _{ 12 } (r _{ 12 } ^{ - 3 } - r _{ 14 } ^{ - 3 }) }\label{eqn:masses1} \\
    \frac{ m _1 } { m _4 } & = -\frac{ r _{ 34} (r _{ 24 } ^{ - 3 } - r _{ 34 } ^{ - 3 }) } { r _{ 12 } (r _{ 12 } ^{ - 3 } - r _{ 13 } ^{ - 3 }) }  & \frac{ m _2} { m _3 } & = -\frac{ r _{ 34} (r _{ 13 } ^{ - 3 } - r _{ 34 } ^{ - 3 }) } { r _{ 12 } (r _{ 12 } ^{ - 3 } - r _{ 24 } ^{ - 3 }) } \label{eqn:masses2} \\ 
    \frac{ m _2} { m _4 } & = \frac{ r _{ 34} (r _{ 14 } ^{ - 3 } - r _{ 34 } ^{ - 3 }) } { r _{ 12 } (r _{ 12 } ^{ - 3 } - r _{ 23 } ^{ - 3 }) }  & \frac{ m _3} { m _4 } & = -\frac{ r _{ 14 } ^{ - 3 } - r _{ 24 } ^{ - 3 } } { r _{ 13 } ^{ - 3 } - r _{ 23 } ^{ - 3 } }.\label{eqn:masses3} 
\end{align} 
\section{Relationship to Cayley-Menger}
Let $ \Delta _i $ be the oriented area of the triangle whose vertices contain all bodies except for the
$ i$ -th body. For a quadrilateral ordered sequentially, we have $ \Delta _1 , \Delta _3 >0 $ and $ \Delta _2, \Delta _4<0 $. The derivatives of the Cayley-Menger determinant at planar c.c.'s are given by the following formula due to Dziobek \cite{dziobek1900uber}
\[ 
    \frac{ \partial H} { \partial r _{ ij } ^2  } (\mathbf{r})  = -32 \Delta _i \Delta _j.  
\]
In a trapezoid the areas $ |\Delta _i|$ take the form:
\[
    |\Delta _1| = |\Delta _2| =  \frac{1}{2}  r _{ 34 } h, \quad  |\Delta _3| = |\Delta _4| =  \frac{1}{2}  r _{ 12 } h
\]
where $h$ is the height, that is,  the distances between the opposite parallel sides. If the parallel sides have different lengths (i.e., $ r _{ 12 } \neq r _{ 34 } $) the  height of a trapezoid can be expressed in terms of mutual distances as follows:

\[
    h = \frac{ \sqrt{(a - c + d + e )(-a + c + d + e )(a + c -d + e )(a + c + d - e )}}{ 2 |a - c |}.
\]
If  $ r _{ 12 } = r _{ 34 } $, then the trapezoid reduces to a parallelogram, in which case, since the area is $ A = r _{ 12 } h $, we have 
\[h = \frac{ r _{ 12 } } A\]
and $ A $ is given by Bretschneider's formula for the  area of a quadrilateral, that is, 
\[
    A =\frac{1}{2} \sqrt{   e ^2  f ^2 - \frac{1}{4} (b ^2 + d ^2 - a ^2 - c ^2 ) ^2  }.
\]
In any case, since 
\[
     \frac{ \partial H} { \partial r _{ ij }  } (\mathbf{r})  = \frac{ \partial H} { \partial r _{ ij } ^2  } (\mathbf{r})\cdot \frac{ d (r _{ ij } ^2) } { d r _{ ij } }   = -64 r _{ ij } \, \Delta _i \Delta _j 
\]
we find that at a trapezoidal central configuration 
\[\nabla H (\mathbf{r}) = 16 r _{ 12 } r _{ 34 } h ^2 (r _{ 34 } , - r _{ 13 } , r _{ 14 } , r _{ 23 } , - r _{ 24 } , r _{ 12 }),\]
where $ h $ is defined above. On the other hand, the gradient of $ F (\mathbf{r}) $  at a trapezoidal configuration is,
\[\nabla F (\mathbf{r}) = 2 r _{ 12 } r _{ 34 }   (r _{ 34 } , - r _{ 13 } , r _{ 14 } , r _{ 23 } , - r _{ 24 } , r _{ 12 }) .\]
Comparing the two gradients above, we have the following proposition
\begin{proposition}
  For any trapezoidal central configuration $ \mathbf{r} $ 
  \[
      \nabla H ( \mathbf{r} ) = 8 h^2~  \nabla F ( \mathbf{r} ).  
  \]  
  In other words, on the set of geometrically realizable vectors for which both $H $ and $F$ vanish, the
gradients of these two functions are parallel.
\end{proposition}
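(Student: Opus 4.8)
The plan is to read the result straight off the two gradient formulas displayed immediately before the statement. Both $\nabla H(\mathbf{r})$ and $\nabla F(\mathbf{r})$ have already been exhibited as scalar multiples of the single directional vector $v=(r_{34},-r_{13},r_{14},r_{23},-r_{24},r_{12})$, so the entire content of the proposition is the comparison of their scalar prefactors. Since $\nabla H(\mathbf{r})=16\,r_{12}r_{34}h^2\,v$ and $\nabla F(\mathbf{r})=2\,r_{12}r_{34}\,v$, and since $r_{12},r_{34}>0$, I would simply divide the prefactors: $16\,r_{12}r_{34}h^2/(2\,r_{12}r_{34})=8h^2$, giving $\nabla H=8h^2\,\nabla F$. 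Because the vector $v$ is literally identical in the two expressions, there is no cancellation or sign ambiguity in this final step.

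To keep the argument self-contained I would first re-establish the two gradient formulas. For $\nabla F$ I would differentiate $F=a^2c^2-\tfrac14(e^2+f^2-b^2-d^2)^2$ entrywise, using $\partial/\partial r_{ij}=2r_{ij}\,\partial/\partial r_{ij}^2$; this produces entries carrying the factor $R=r_{13}^2+r_{24}^2-r_{14}^2-r_{23}^2$ on the diagonals and legs, and evaluating at a trapezoidal configuration, where $F=0$ forces $R=2r_{12}r_{34}$, collapses them into the stated form. For $\nabla H$ I would invoke Dziobek's identity $\partial H/\partial r_{ij}^2=-32\,\Delta_i\Delta_j$ together with $\partial H/\partial r_{ij}=-64\,r_{ij}\,\Delta_i\Delta_j$, substitute the trapezoidal areas $|\Delta_1|=|\Delta_2|=\tfrac12 r_{34}h$ and $|\Delta_3|=|\Delta_4|=\tfrac12 r_{12}h$, and feed in the signs $\Delta_1,\Delta_3>0$, $\Delta_2,\Delta_4<0$.

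The step demanding the most care is the sign bookkeeping in $\nabla H$: Dziobek's formula returns the products $\Delta_i\Delta_j$ with definite signs, and one must assign $\Delta_1,\Delta_3>0$, $\Delta_2,\Delta_4<0$ correctly so that the signed entries match those of $v$ exactly rather than introducing a stray minus sign. The only genuinely conceptual point is that this is a \emph{pointwise} statement on the common zero set of trapezoidal configurations, not an identity between $H$ and $F$ globally: the clean form of $\nabla F$ uses the substitution $R=2r_{12}r_{34}$, valid only where $F=0$, and the area expressions entering $\nabla H$ presuppose the trapezoidal shape. Once both gradients are written in the common-direction form the conclusion is immediate, and the parallelogram case $r_{12}=r_{34}$ requires no separate treatment since $h$ remains well-defined there.
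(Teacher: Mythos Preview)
Your proposal is correct and follows essentially the same approach as the paper: the paper simply displays the two gradient formulas $\nabla H(\mathbf{r})=16\,r_{12}r_{34}h^{2}\,v$ and $\nabla F(\mathbf{r})=2\,r_{12}r_{34}\,v$ with $v=(r_{34},-r_{13},r_{14},r_{23},-r_{24},r_{12})$, then remarks that comparing them yields the proposition. Your write-up is, if anything, more careful than the paper's in flagging the sign bookkeeping for the $\Delta_i$ and the fact that both formulas hold only on the trapezoidal locus.
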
 
Note that a isosceles trapezoid central configurations is both a trapezoidal and co-circular. Therefore taking the proposition above together with Lemma 2.1 in \cite{cors2012four} implies that on the set of geometrically realizable vectors for which $H $, $F$   and $ P =  r _{ 12 } r _{ 34 } +r _{14 } r _{ 23 } - r _{ 13 } r _{ 24 } $  vanish the  gradients of these three functions are parallel. Thus, the codimension one level
surfaces defined by the equations $H=0 $, $F=0$,   and $ P=0 $    meet tangentially at the isosceles trapezoid configurations.

\section{Some Applications of the c.c. equations}
 In this section we apply equations \eqref{eqn:cc1}-\eqref{eqn:cc3} to obtain interesting results in the cases some of the masses are equal. We begin with two  propositions that have a very simple proof.
\begin{proposition}\label{prop:isosceles}
  If $ m _1 = m _2 $ and $ m _3 = m _4 $, then the corresponding trapezoidal central configuration
must be an isosceles trapezoid. 
\end{proposition} 
\begin{proof} 
By using the two equations \eqref{eqn:cc_n2} it follows that that the diagonals are equal, that is, $ r _{ 13 } = r _{ 24 } $. To conclude the proof it is enough to observe that a  trapezoid with diagonals of equal length is an isosceles trapezoid. 
\end{proof}

\begin{proposition}\label{prop:rhombus}
  If $ m _1 = m _3 $ and $ m _2 = m _4 $, then the corresponding trapezoidal central configuration
must be a rhombus. 
\end{proposition}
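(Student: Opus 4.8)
The plan is to imitate the proof of Proposition~\ref{prop:isosceles}: extract a geometric constraint directly from the mass symmetry using the grouped system \eqref{eqn:cc_n1}--\eqref{eqn:cc_n3}, and then show that the remaining equations admit only the rhombus.

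First, substituting $m_1=m_3$ and $m_2=m_4$ into the pair \eqref{eqn:cc_n3} makes both right-hand sides equal to $\sigma\,r_{12}r_{34}$ and both left-hand coefficients equal to $m_1m_2$; hence $r_{14}^{-3}=r_{23}^{-3}$, so $r_{14}=r_{23}$ and the two legs coincide. The configuration is therefore an isosceles trapezoid, and it remains to force the parallel sides to equal the common leg length $\ell:=r_{14}=r_{23}$. The computational engine of the proof is obtained by eliminating $\lambda$ and $\sigma$ from the two equations \eqref{eqn:cc_n1} together with one equation of \eqref{eqn:cc_n3}: since all three share the coefficient $m_1m_2$, a short elimination gives the single relation
\[(r_{12}+r_{34})\,\ell^{-3}=r_{12}^{-2}+r_{34}^{-2}.\]

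I would then distinguish two cases according to whether $r_{12}=r_{34}$. If $r_{12}\neq r_{34}$, the trapezoid is proper, so equal legs force equal diagonals $r_{13}=r_{24}$ (the fact already used in Proposition~\ref{prop:isosceles}). Substituting into \eqref{eqn:cc_n2} yields $(m_1^2-m_2^2)(r_{13}^{-3}-\lambda)=0$; the second factor cannot vanish, for $r_{13}^{-3}=\lambda$ would force $\sigma=0$ and hence all six distances equal, which is impossible for a planar configuration. Thus $m_1=m_2$, so all four masses are equal. Dividing the first equation of \eqref{eqn:cc_n1} by the first of \eqref{eqn:cc_n2} and the second by the second, then subtracting and using $r_{13}=r_{24}$, collapses everything to
\[r_{13}^{-3}=-\frac{r_{12}+r_{34}}{r_{12}^{2}r_{34}^{2}}<0,\]
which is absurd. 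Hence $r_{12}=r_{34}$; the engine relation above then gives $\ell=r_{12}=r_{34}$, so all four sides are equal and the configuration is a rhombus. Note that the masses need not be equal in this surviving case, which is consistent with the existence of non-square rhombus central configurations.

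The main obstacle is the parallelogram degeneracy $r_{12}=r_{34}$: there the diagonal formulas \eqref{eqn:diagonal1}--\eqref{eqn:diagonal2} have vanishing denominator and the implication ``equal legs $\Rightarrow$ equal diagonals'' fails, so the sign contradiction of the proper case does not apply and this case must be separated off and dispatched by the engine relation. Securing the correct sign in the proper-trapezoid computation is the other point I would check carefully, since it is exactly what eliminates that case.
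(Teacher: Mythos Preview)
Your proof is correct. The overall architecture matches the paper's---use \eqref{eqn:cc_n3} to get $r_{14}=r_{23}$, then split into the proper-trapezoid and parallelogram cases---but the execution in each branch differs. In the proper case ($r_{12}\neq r_{34}$) both arguments pass through ``equal diagonals $\Rightarrow$ all masses equal'' via \eqref{eqn:cc_n2}, but the paper then invokes the mass-ratio formula \eqref{eqn:masses2} to force $r_{12}=r_{34}$ (so the configuration is a square), whereas you eliminate $\lambda$ directly from \eqref{eqn:cc_n1}--\eqref{eqn:cc_n2} to obtain the sign contradiction $r_{13}^{-3}=-(r_{12}+r_{34})/(r_{12}^2r_{34}^2)<0$. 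In the parallelogram case the paper simply asserts the dichotomy ``isosceles trapezoid or rhombus''; your engine relation $(r_{12}+r_{34})\ell^{-3}=r_{12}^{-2}+r_{34}^{-2}$ gives an explicit one-line reason why a parallelogram here must have $\ell=r_{12}=r_{34}$, which is a cleaner treatment of that branch. Your observation that the masses need not be equal in the surviving rhombus case is also correct and worth keeping.
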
 
\begin{proof} 
By using the two equations \eqref{eqn:cc_n3} it follows that  $ r _{ 14 } = r _{ 23 } $.
It follows that the quadrilateral is either a isosceles trapezoid or a rhombus.

If it is an isosceles trapezoid the diagonals have equal length, and from equation  \eqref{eqn:cc_n2} it follows that the masses are all equal. From the first of equation \eqref{eqn:masses2} it follows that the bases are equal, that is $ r _{ 12 } = r _{ 34 } $. Moreover, from equations \eqref{eqn:cc_n1} and \eqref{eqn:cc_n2}  we have that all the exterior sides are equal. The quadrilateral is then a square. 
In either case the quadrilateral is a rhombus.  

\end{proof} 

Cors and Roberts \cite{cors2012four} proved  that in any co-circular configuration with a given ordering of the mutual distances the masses must be ordered in a precise fashion, that is, the set of masses $ \{ m _1 , m _2 , m _3 , m _4 \} $ is totally ordered. 
We now want to obtain an similar  result in the case of trapezoidal central configurations. Because of the different geometry, however, it turns out that the set of masses is not totally ordered.  In fact, in  this case, it is only possible to obtain the following weaker result


\begin{theorem} \label{thm:masses}
Any trapezoid central configuration in $ \Omega $
satisfies
\[m _3 \leq m _4 \leq m _2\quad \mbox{and} \quad m _3 \leq m _1 . \]
\end{theorem}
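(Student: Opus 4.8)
The plan is to establish the three defining comparisons $m_3\le m_4$, $m_4\le m_2$, and $m_3\le m_1$ one at a time, in each case starting from the appropriate ratio in \eqref{eqn:masses1}--\eqref{eqn:masses3} and using the ordering defining $\Omega$ to fix the signs of the differences $r_{ij}^{-3}-r_{k\ell}^{-3}$. Since $x\mapsto x^{-3}$ is decreasing, the chain $r_{13}\ge r_{24}>r_{12}\ge r_{14}\ge r_{23}\ge r_{34}$ reverses to $r_{13}^{-3}\le r_{24}^{-3}<r_{12}^{-3}\le r_{14}^{-3}\le r_{23}^{-3}\le r_{34}^{-3}$; this alone already shows every ratio in \eqref{eqn:masses1}--\eqref{eqn:masses3} is positive, so the masses share a common sign, as they must.

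First I would dispatch $m_3\le m_4$, which is the clean case. Writing the second relation in \eqref{eqn:masses3} as $m_3/m_4=(r_{14}^{-3}-r_{24}^{-3})/(r_{23}^{-3}-r_{13}^{-3})$ with both numerator and denominator positive, the inequality $m_3/m_4\le 1$ is equivalent to $r_{13}^{-3}+r_{14}^{-3}\le r_{23}^{-3}+r_{24}^{-3}$, which follows termwise from $r_{13}\ge r_{24}$ and $r_{14}\ge r_{23}$ and uses nothing beyond the ordering in $\Omega$. Next I would reduce the other two comparisons to a single inequality. Clearing the (negative) denominator in the first relation of \eqref{eqn:masses3} shows that $m_4\le m_2$ is equivalent to
\[ r_{12}^{-2}+r_{34}^{-2}\ \ge\ r_{12}\,r_{23}^{-3}+r_{34}\,r_{14}^{-3}, \]
which I will call inequality (A), while the second relation of \eqref{eqn:masses1} shows that $m_3\le m_1$ is equivalent to
\[ r_{12}^{-2}+r_{34}^{-2}\ \ge\ r_{12}\,r_{14}^{-3}+r_{34}\,r_{23}^{-3}, \]
which I will call inequality (B). Since the difference of the two right-hand sides is $(r_{12}-r_{34})(r_{23}^{-3}-r_{14}^{-3})\ge 0$ (here $r_{12}>r_{34}$ and $r_{14}\ge r_{23}$), the right side of (A) dominates that of (B)---this is exactly the rearrangement inequality, pairing the larger length with the larger inverse cube---so (A) implies (B) and the whole theorem reduces to proving (A).

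The hard part will be (A), and here the ordering in $\Omega$ is genuinely insufficient: one can exhibit flat isosceles trapezoids whose six distances lie in $\Omega$ yet violate (A), so any proof of (A) must invoke that the configuration is an \emph{actual} central configuration, not merely a trapezoid in $\Omega$. To import that information I would return to \eqref{eqn:cc_n1}--\eqref{eqn:cc_n3} and study the sign pattern of the factors $r_{ij}^{-3}-\lambda$. Multiplying the two equations in \eqref{eqn:cc_n1} and the two in \eqref{eqn:cc_n3} gives $(r_{12}^{-3}-\lambda)(r_{34}^{-3}-\lambda)=(r_{14}^{-3}-\lambda)(r_{23}^{-3}-\lambda)$, so the four ``side'' factors share a common sign; ruling out the possibility that they are all negative (which by \eqref{eqn:cc_n2} would force the two ``diagonal'' factors to have the wrong sign, contradicting $r_{24}>r_{12}$) yields $\sigma>0$ together with the strict localization $r_{24}^{-3}<\lambda<r_{12}^{-3}$.

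Finally I would rewrite $m_2/m_4$ from the pair that produces it, obtaining $m_2/m_4=r_{34}(r_{34}^{-3}-\lambda)\big/\big(r_{12}(r_{23}^{-3}-\lambda)\big)$, a Möbius function of $\lambda$ that is increasing on the admissible interval; thus $m_4\le m_2$ becomes a lower bound $\lambda\ge\lambda_{\ast}$, where $\lambda_{\ast}$ is a definite function of the four exterior lengths. The crux, and the main obstacle, is then to verify this lower bound: the localization $\lambda>r_{24}^{-3}$ by itself is not quite enough, and I expect to close the gap by feeding the remaining equality in \eqref{eqn:dziobek} (equivalently \eqref{eqn:relation}) together with the diagonal formulas \eqref{eqn:diagonal1}--\eqref{eqn:diagonal2} into the comparison $\lambda\ge\lambda_{\ast}$. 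This is the computational heart of the argument, and it is precisely the step whose delicacy explains why the trapezoidal conclusion is only a partial order, weaker than the total order available in the co-circular case of \cite{cors2012four}.
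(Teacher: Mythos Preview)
Your treatment of $m_3\le m_4$ is fine and matches the paper. Your reduction of the other two inequalities to (A) and (B), and the rearrangement observation that (A) implies (B), are correct and pleasant. But the proposal is \emph{incomplete}: you never prove (A). You correctly diagnose that the ordering alone is insufficient, sketch a plan (localize $\lambda$, exploit monotonicity of a M\"obius expression, then ``close the gap'' with \eqref{eqn:relation} and the diagonal formulas), and stop. That final step is exactly the content of the theorem for $m_4\le m_2$; without it, nothing is proved.

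The paper avoids your obstacle entirely by choosing a different algebraic representation of the two ratios. Rather than taking $m_2/m_4$ from the single equation in \eqref{eqn:masses3}, it writes $m_2/m_4=(m_2/m_3)(m_3/m_4)$ from \eqref{eqn:masses2}--\eqref{eqn:masses3}, which after clearing inverse cubes factors as
\[
\frac{m_2}{m_4}=\Bigl(\frac{r_{12}r_{23}^{2}}{r_{34}r_{14}^{2}}\Bigr)^{2}\cdot\frac{r_{14}}{r_{23}}\cdot\frac{r_{13}^{3}-r_{34}^{3}}{r_{13}^{3}-r_{23}^{3}}\cdot\frac{r_{24}^{3}-r_{14}^{3}}{r_{24}^{3}-r_{12}^{3}},
\]
a product of four factors, the last three of which are $\ge 1$ just from the $\Omega$ ordering. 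The first factor is handled by a short geometric lemma specific to trapezoids (Lemma~\ref{lem:r3412}): $r_{23}^{2}/r_{14}^{2}\ge r_{34}/r_{12}$, proved from the diagonal formula \eqref{eqn:diagonal2} and Euler's quadrilateral inequality. This is the missing idea in your plan: one never needs to bound $\lambda$ at all. Similarly, for $m_1/m_3$ the paper feeds \eqref{eqn:relation} into the second equation of \eqref{eqn:masses1} to obtain
\[
\frac{m_1}{m_3}=\frac{r_{14}^{3}}{r_{23}^{3}}\cdot\frac{r_{12}^{2}}{r_{34}^{2}}\cdot\frac{r_{24}^{3}-r_{34}^{3}}{r_{24}^{3}-r_{14}^{3}}\cdot\frac{r_{13}^{3}-r_{23}^{3}}{r_{13}^{3}-r_{12}^{3}},
\]
again a product of factors each $\ge 1$ by the ordering alone. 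So the route you were groping toward---invoke \eqref{eqn:relation} and the trapezoid diagonal formulas---is the right instinct, but the efficient way to use them is to refactor the mass ratios, not to chase a lower bound on $\lambda$.
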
 

Before proving this we present two important inequalities needed in the proof of the theorem.  

\begin{lemma}\label{lem:decr_func}
    Let $ \phi : I \to \mathbb{R}  $ be a decreasing differentiable function on an interval $ I \subset \mathbb{R}  $. Suppose that $ x _1 \leq x _2 \leq x _3 \leq x _4 $, then 
    \[
        \frac{ \phi (x _2) - \phi (x _3) } { \phi (x _1) - \phi (x _4) } \leq 1.    
    \] 
\end{lemma}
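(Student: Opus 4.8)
The plan is to avoid the derivative entirely and argue purely from monotonicity, reducing the claimed bound to a telescoping sum of nonnegative terms. First I would record the two sign facts that make the quotient well-behaved. Since $\phi$ is decreasing and $x_2 \leq x_3$, the numerator satisfies $\phi(x_2) - \phi(x_3) \geq 0$; since $x_1 \leq x_4$, the denominator satisfies $\phi(x_1) - \phi(x_4) \geq 0$. Assuming (as the statement implicitly requires, so that the quotient is defined) that the denominator is nonzero, it is therefore strictly positive, so establishing the bound $\leq 1$ is equivalent to proving that the numerator is at most the denominator, i.e.
\[
\phi(x_2) - \phi(x_3) \leq \phi(x_1) - \phi(x_4).
\]

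The key step is then to expand the denominator as a telescoping sum keyed to the chain $x_1 \leq x_2 \leq x_3 \leq x_4$:
\[
\phi(x_1) - \phi(x_4) = \bigl[\phi(x_1) - \phi(x_2)\bigr] + \bigl[\phi(x_2) - \phi(x_3)\bigr] + \bigl[\phi(x_3) - \phi(x_4)\bigr].
\]
By monotonicity each of the two outer brackets is nonnegative, while the middle bracket is exactly the numerator. Hence the denominator equals the numerator plus a nonnegative quantity, which yields the desired inequality and closes the argument. (Equivalently, one may rearrange the target inequality to $\phi(x_2) - \phi(x_1) \leq \phi(x_3) - \phi(x_4)$ and observe that the left-hand side is $\leq 0$ while the right-hand side is $\geq 0$.)

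I do not expect any serious obstacle here: the differentiability hypothesis is not actually needed for this estimate, and I suspect it is retained only because the same $\phi$, namely $\phi(x) = x^{-3}$, will be used in a strict-monotonicity form elsewhere. The one point deserving a word of care is the nonvanishing of the denominator, which is guaranteed whenever $\phi$ is strictly decreasing and $x_1 < x_4$. This is precisely the setting of the applications to the mass-ratio equations \eqref{eqn:masses1}--\eqref{eqn:masses3}, where $\phi(x) = x^{-3}$ is strictly decreasing on $(\mathbb{R}^+)$ and the strict inequality $r_{24} > r_{12}$ built into $\Omega$ supplies the required strictness. I would therefore state this nonvanishing assumption explicitly before running the sign-and-telescope computation, so that the quotient is legitimate throughout.
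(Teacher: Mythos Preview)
Your argument is correct and is essentially the same as the paper's: both reduce the bound to $\phi(x_2)-\phi(x_3)\leq \phi(x_1)-\phi(x_4)$ using only that $\phi$ reverses the chain $x_1\leq x_2\leq x_3\leq x_4$, with your telescoping presentation being a slightly more explicit version of the paper's one-line observation that $\phi(x_1)\geq\phi(x_2)\geq\phi(x_3)\geq\phi(x_4)$. Your remark that differentiability is unnecessary and that the denominator must be assumed nonzero is accurate and worth keeping.
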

\begin{proof} The proof is straightforward. Since $ x _1 \leq x _2 \leq x _3 \leq x _4 $ and $ \phi $ is decreasing, then $ \phi (x _1) \geq \phi (x _2) \geq \phi (x _3) \geq \phi (x _4) $. It follows that 
$ \phi (x _1) - \phi (x _4) \geq \phi (x _2) - \phi (x _3) $. Since both sides of the inequality are positive we obtain 
\[
    \frac{ \phi (x _2) - \phi (x _3) } { \phi (x _1) - \phi (x _4) } \leq 1,
\]
which is our claim. 
\end{proof} 

\begin{lemma}\label{lem:r3412} Any trapezoidal central configuration in $ \Omega $ satisfies 
    \[
         \frac{ r _{ 23} ^2} { r _{ 14 }^2 } 
         \geq \frac{ r _{ 34 } } { r _{ 12 } },  \]
     where the equality sign holds if, and only if, the trapezoid is a parallelogram with $ r _{ 12 } = r _{ 34 } $ and $ r _{ 23 } = r _{ 14 } $.
\end{lemma}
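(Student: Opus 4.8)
The plan is to recognize that the stated inequality is, after clearing denominators, equivalent to a lower bound on one of the diagonals, namely $r_{24}^2 \ge r_{12} r_{34}$, and that this bound follows immediately from the ordering built into $\Omega$. Since every mutual distance is positive, $\frac{r_{23}^2}{r_{14}^2} \ge \frac{r_{34}}{r_{12}}$ is equivalent to $r_{12} r_{23}^2 \ge r_{34} r_{14}^2$, so it suffices to control the sign of $r_{12} r_{23}^2 - r_{34} r_{14}^2$.

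To make the connection with the diagonal explicit, I would start from the trapezoid diagonal formula \eqref{eqn:diagonal2}. Squaring and subtracting $r_{12} r_{34}$ gives
\[
r_{24}^2 - r_{12} r_{34} = \frac{r_{12} r_{23}^2 - r_{34} r_{14}^2}{r_{12} - r_{34}}.
\]
Thus, in the genuine (non-parallelogram) case $r_{12} > r_{34}$ the denominator is positive, so the sign of the numerator $r_{12} r_{23}^2 - r_{34} r_{14}^2$ coincides with the sign of $r_{24}^2 - r_{12} r_{34}$. The inequality I want is therefore equivalent to $r_{24}^2 \ge r_{12} r_{34}$.

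Next I would establish $r_{24}^2 \ge r_{12} r_{34}$ directly from the chain \eqref{eqn:inequalities}. Since $r_{24} > r_{12} \ge r_{34} > 0$, one has $r_{24}^2 > r_{12}^2 \ge r_{12} r_{34}$, so the bound is in fact strict whenever $r_{12} > r_{34}$. Feeding this back through the equivalence above yields the strict inequality $\frac{r_{23}^2}{r_{14}^2} > \frac{r_{34}}{r_{12}}$ for every non-parallelogram trapezoid in $\Omega$.

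Finally, the equality case is where I expect the only real care to be needed, because the boundary $r_{12} = r_{34}$ makes \eqref{eqn:diagonal2} singular and must be handled separately. There the chain \eqref{eqn:inequalities} collapses, forcing $r_{12} = r_{14} = r_{23} = r_{34}$; a quadrilateral whose parallel sides $r_{12}, r_{34}$ are equal is a parallelogram (a rhombus within $\Omega$) with $r_{23} = r_{14}$, so both sides of the claimed inequality equal $1$ and equality holds. Conversely, the strictness obtained above shows equality is impossible as soon as $r_{12} > r_{34}$, so equality forces precisely the parallelogram configuration described in the statement. The main obstacle is thus not any hard estimate but the bookkeeping of the degenerate denominator together with confirming that the parallelogram is the unique equality case.
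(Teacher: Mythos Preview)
Your argument is correct and, in fact, cleaner than the paper's. Both proofs start from the diagonal formula \eqref{eqn:diagonal2} in the non-parallelogram case $r_{12}>r_{34}$ and from the strict bound $r_{24}>r_{12}$ built into $\Omega$. You rewrite the formula as
\[
r_{24}^2-r_{12}r_{34}=\frac{r_{12}r_{23}^2-r_{34}r_{14}^2}{r_{12}-r_{34}},
\]
and then observe that $r_{24}^2>r_{12}^2\geq r_{12}r_{34}$ forces the numerator to be positive, giving the strict inequality at once. The paper instead uses only $r_{24}^2>r_{12}^2$, which after rearrangement leaves an expression whose sign is not obvious; it then substitutes the trapezoid relation \eqref{eqn:trapezoid} and finishes with Euler's quadrilateral inequality $r_{13}^2+r_{24}^2\leq r_{12}^2+r_{23}^2+r_{14}^2+r_{34}^2$. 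Your sharper comparison $r_{24}^2>r_{12}r_{34}$ eliminates the need for Euler's inequality altogether, so the student route is strictly more elementary while yielding the same strict conclusion. The handling of the equality (parallelogram) case is the same in both proofs.
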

%
\begin{proof}
    If $ r _{ 12 } \neq r _{ 34 } $, then we can use equation \eqref{eqn:diagonal2}. Since   $ r _{ 24 } >r _{ 12 } $,  we have that
    \[ r _{ 12 } r  _{ 34 } (r _{ 12 } - r _{ 34 }) - r _{ 34 } r _{ 14 } ^2 + r _{ 12 } r _{ 23 } ^2 > r _{ 12 }^2 (r _{ 12 } - r _{ 34 }),  \]
        or $ - r _{ 34 } r _{ 14 } ^2 > r _{ 12 } (r _{ 34 } ^2 + r _{ 12 } ^2 - r _{ 23 } ^2 - 2 r _{ 12 } r _{ 34 }) $. Using condition \eqref{eqn:trapezoid}, that is, $ 2r _{ 12 } r _{ 34 } = r _{ 13 } ^2 + r _{ 24 } ^2 - r _{ 23 } ^2 - r _{ 14 } ^2 $ we obtain 
        \begin{align*}
           r _{ 34 } r _{ 14 } ^2&  < r _{ 12 } (r _{ 13 } ^2 + r _{ 24 } ^2 - r _{ 34 } ^2 - r _{ 12 } ^2 - r _{ 14 } ^2)    \leq r _{ 12 }  r _{ 23 } ^2, 
        \end{align*}   
where we used Euler's   quadrilateral inequality $ r _{ 13 } ^2 + r _{ 24 } ^2 \leq r _{ 12 } ^2 + r _{ 14 } ^2 + r _{ 23 } ^2 + r _{ 34 } ^2$ (see  \cite{dunham2000quadrilaterally}, or any good book on Euclidean geometry for a proof). It follows that $  \frac{ r _{ 23} ^2} { r _{ 14 }^2 } 
        > \frac{ r _{ 34 } } { r _{ 12 } } $.

        If $ r _{ 12 } = r _{ 34 } $ then the trapezoid degenerates to a parallelogram and thus $ r _{ 23 } = r _{ 14 } $. It follows that $1=  \frac{ r _{ 23} ^2} { r _{ 14 }^2 } 
        \geq  \frac{ r _{ 34 } } { r _{ 12 } } =1$, which completes the proof. 
\end{proof} 
We are now ready to prove  Theorem \ref{thm:masses}.
\begin{proof}[Proof of Theorem \ref{thm:masses}]
We first  prove that $ m _3 \leq m _4 $.
Using the second of equations \eqref{eqn:masses3} we have 
\[
    \frac{ m _3} { m _4 }  =  
    \frac{  r _{ 14 } ^{-3} - r _{ 24 } ^{-3} } { r _{ 23 } ^{-3} - r _{ 13 } ^{-3}}.
\]
 Let $ \phi: \mathbb{R}^+ \to \mathbb{R}  $ be defined by $ \phi (x) = x ^{ - 3 } $, then $ \phi $ is a decreasing function. An application of Lemma \ref{lem:decr_func} with $ x _1 = r _{ 23 }$, $x _2 = r _{ 14 } $, $ x _3 = r _{ 24 } $, and $ x _4 = r  _{ 13 } $, yields that 
 \[
     \frac{ \phi (x _2) - \phi (x _3) } { \phi (x _1) - \phi (x _4) }=\frac{  r _{ 14 } ^{-3} - r _{ 24 } ^{-3} } { r _{ 23 } ^{-3} - r _{ 13 } ^{-3}} \leq 1. 
 \] 
 It follows that,
\begin{equation}\label{eqn:m3m4} m _3 \leq m _4.  
\end{equation} 

Next, we verify that $ m _2 \geq m _4 $.
Multiplying the second equation in  \eqref{eqn:masses2} and the second equation in  \eqref{eqn:masses3}, we find 
\begin{align*}
    \frac{ m _2 } {  m _4 }  
    & =    \frac{ r _{ 12 } ^2 r _{ 23 } ^3 } { r _{ 34 } ^2  r _{ 14 } ^3 } \cdot 
   \frac{(r _{ 13 } ^3 - r _{ 34 } ^3) (r _{ 14 } ^3 - r _{ 24 } ^3)    }{ (r _{ 12 } ^3 - r _{ 24 } ^3)  (r _{ 13 } ^3 - r _{ 23 } ^3)    } \\
   & =\left( \frac{ r _{ 12 } r _{ 23 } ^2 } { r _{ 34 } r _{ 14 } ^2 } \right)  ^2 \cdot  \frac{ r _{ 14 } } { r _{ 23 } } \cdot  \frac{(r _{ 13 } ^3 - r _{ 34 } ^3) }{(r _{ 13 } ^3 - r _{ 23 } ^3)} \cdot \frac{(r _{ 24 } ^3 - r _{ 14 } ^3)    }{   (r _{ 24 } ^3 - r _{ 12 } ^3)    }.
\end{align*} 
All the fractions in the equation above are greater than or equal to one. In fact, the first fraction  is greater than or equal to one by Lemma \ref{lem:r3412},  the second fraction  because $ r _{ 14 } \geq r _{ 23 } $, and  the last two fractions because $ r _{ 23 } 
\geq r _{ 34 } $ and $ r _{ 12 } \geq r _{ 14 } $, respectively. This shows that $ m _2 \geq m _4 $.

Finally, using equations \eqref{eqn:masses1} and \eqref{eqn:relation} we find that 
\[
    \frac{ m _1 } { m _3 } = \frac{ r _{ 14 } ^3 } { r _{ 23 } ^3 } \cdot \frac{ r _{ 12 } ^2 } { r _{ 34 } ^2 } \cdot\frac{  r _{ 23 } ^3 - r _{ 34 } ^3 } { r _{ 12 } ^3 - r _{ 14 } ^3 }
    = \frac{ r _{ 14 } ^3 } { r _{ 23 } ^3 } \cdot \frac{ r _{ 12 } ^2 } { r _{ 34 } ^2 }\cdot \frac{ (r _{ 24 } ^3 - r _{ 34 } ^3)}{(r _{ 24 } ^3 - r _{ 14 } ^3) } \cdot \frac{(r _{ 13 } ^3 - r _{ 23 } ^3) } { (r _{ 13 } ^3 - r _{ 12 } ^3)  } .   
\]
All the fractions in the equation above are greater  than or equal to  one. To me more precise, the first two fractions are greater than  or equal to one because $ r _{ 14 } \geq r _{ 23 } $ and $ r _{ 12 } \geq r _{ 34 }  $, respectively. The last two fractions  because $ r _{ 14 } \geq r _{ 34 } $ and $ r _{ 12 } \geq r _{ 23 } $, respectively. Hence, $ m _1 \geq m _3 $, and the proof is complete.  

\end{proof} 

\begin{remark}
   From numerical experiments it appears  that $m _1 $ can be larger or smaller than either $ m _2 $ and $ m _4 $. 
  For example choosing 
 
  \begin{align*}    
      r_{13}&  = 9.7414781617108145730, &   r_{24} & = 8.75000000000000000, \\
       r_{12}  & = 8 & r_{14} & = 7.52080447824566090, \\
      r_{23} & = 7.1064329749865061893, & r_{34} & = 4.0246879466945716437, 
  \end{align*}
 we have $ m _1 / m _2 = 1.0194571510769873907 $ and $ m _1 / m _4 = 7.9942119368105807422 $, also these distances satisfy  condition \eqref{eqn:relation}. 

On the other hand  if 
  \begin{align*}    
      r_{13}&  = 12.129061710615553753, &   r_{24} & = 9.5117033174926140565, \\
       r_{12}  & = 8 & r_{14} & =  7.8020830551846857406, \\
      r_{23} & = 7.6549229903601603027, & r_{34} & =  7.3822682494734852600, 
  \end{align*}
we have $ m _1 / m _2 =0.69074480337446980353 $ and $ m _1 / m _4 = 0.87696321790891338292 $, also these distances satisfy condition \eqref{eqn:relation}. 

\end{remark}
We now show that, as a  consequence of Theorem \ref{thm:masses}, if  two of the masses are equal, there are  strong restrictions on the shape of the allowed central configuration. 
\begin{proposition} For any trapezoidal
c.c. in $\Omega $, if either $ m _2 = m _4 $ or $ m _1 = m _3 $  the configuration is a rhombus and the remaining two masses are necessarily equal. If 
$ m _3 = m _4 $, then the configuration is an isosceles trapezoid and the other two masses are necessarily equal.    \end{proposition}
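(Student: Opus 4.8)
The plan is to exploit the fact that, in the proof of Theorem~\ref{thm:masses}, each of the mass ratios $m_2/m_4$, $m_1/m_3$, and $m_3/m_4$ was bounded below by $1$, either by writing it as a product of factors that are individually at least $1$, or (for $m_3/m_4$) through Lemma~\ref{lem:decr_func}. The key observation is that if two masses are \emph{equal}, the corresponding lower bound must be attained, so every individual factor has to equal $1$ exactly. Translating each such factor equality back into a relation among the mutual distances will force the four exterior sides to coincide (a rhombus) in the first two cases, and will force the legs and diagonals to be equal (an isosceles trapezoid) in the third. Once the geometry is pinned down, I would return to the original equations \eqref{eqn:cc_n1}--\eqref{eqn:cc_n3} to deduce the remaining mass equality.

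First I would treat $m_1=m_3$. Setting the product expression for $m_1/m_3$ from the proof of Theorem~\ref{thm:masses} equal to $1$, and using that all four of its factors are $\geq 1$, forces each factor to equal $1$; reading them off gives $r_{14}=r_{23}$, $r_{12}=r_{34}$, $r_{34}=r_{14}$ and $r_{23}=r_{12}$, hence $r_{12}=r_{23}=r_{34}=r_{14}$ and the quadrilateral is a rhombus. The case $m_2=m_4$ is analogous but slightly more delicate: the first factor of the product for $m_2/m_4$ is $\bigl(r_{12}r_{23}^2/(r_{34}r_{14}^2)\bigr)^2$, and setting it to $1$ invokes the equality clause of Lemma~\ref{lem:r3412}, which already forces a parallelogram with $r_{12}=r_{34}$ and $r_{23}=r_{14}$; the remaining factors being $1$ then give $r_{34}=r_{23}$ (and $r_{14}=r_{12}$), so once more all four exterior sides coincide and the configuration is a rhombus.

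With the rhombus established, I would note that $\sigma\neq 0$ (otherwise \eqref{eqn:cc_n1}--\eqref{eqn:cc_n3} would give $r_{ij}^{-3}=\lambda$ for every pair, forcing all six distances equal, which is impossible for a planar configuration), so each factor $r_{ij}^{-3}-\lambda$ is nonzero. Writing $s$ for the common side length, equations \eqref{eqn:cc_n1} and \eqref{eqn:cc_n3} reduce to $m_1m_2=m_3m_4$ and $m_1m_4=m_2m_3$. If $m_2=m_4$, the first relation gives $m_1=m_3$; if $m_1=m_3$, the first relation gives $m_2=m_4$. This settles both rhombus cases.

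Finally I would handle $m_3=m_4$. Here $m_3/m_4$ was bounded by Lemma~\ref{lem:decr_func} with $x_1=r_{23}$, $x_2=r_{14}$, $x_3=r_{24}$, $x_4=r_{13}$, so I first sharpen that lemma: writing $\phi(x_1)-\phi(x_4)=[\phi(x_1)-\phi(x_2)]+[\phi(x_2)-\phi(x_3)]+[\phi(x_3)-\phi(x_4)]$ with $\phi$ strictly decreasing shows the ratio equals $1$ precisely when $x_1=x_2$ and $x_3=x_4$. Thus $m_3=m_4$ forces $r_{23}=r_{14}$ and $r_{13}=r_{24}$, and equal diagonals make the configuration an isosceles trapezoid. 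Using $\sigma\neq 0$ again, equations \eqref{eqn:cc_n2} and \eqref{eqn:cc_n3} become $m_1m_3=m_2m_4$ and $m_1m_4=m_2m_3$, and substituting $m_3=m_4$ into the first yields $m_1=m_2$. I expect the main obstacle to be the bookkeeping in the two rhombus cases — in particular invoking the equality clause of Lemma~\ref{lem:r3412} correctly — together with supplying the equality analysis for Lemma~\ref{lem:decr_func}, since the lemma as stated records only the inequality.
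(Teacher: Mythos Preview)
Your proposal is correct and follows essentially the same strategy as the paper: exploit the product formulas for $m_1/m_3$ and $m_2/m_4$ from the proof of Theorem~\ref{thm:masses}, force each factor to equal $1$, read off the resulting distance equalities, and then recover the remaining mass equality from the c.c.\ equations. The only cosmetic differences are that the paper, in the $m_2=m_4$ case, first uses the factor $r_{14}/r_{23}=1$ and then reduces the first factor to $(r_{12}/r_{34})^2$ rather than invoking the equality clause of Lemma~\ref{lem:r3412}; in the $m_3=m_4$ case it rearranges \eqref{eqn:masses3} into $r_{24}^{-3}-r_{13}^{-3}=r_{14}^{-3}-r_{23}^{-3}$ and argues by sign rather than sharpening Lemma~\ref{lem:decr_func}; and for the leftover mass equalities it cites \eqref{eqn:cc_n3} or \eqref{eqn:masses1} directly instead of going through $\sigma\neq 0$.
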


\begin{proof}
  First we consider the case  $  m _1 = m _3 $. By the proof of Theorem \ref{thm:masses} we have that $ m _1 \geq m _3 $
  and the expression 
  \[
    \frac{ m _1 } { m _3 }      = \frac{ r _{ 14 } ^3 } { r _{ 23 } ^3 } \cdot \frac{ r _{ 12 } ^2 } { r _{ 34 } ^2 }\cdot \frac{ (r _{ 24 } ^3 - r _{ 34 } ^3)}{( r _{ 24 } ^3 - r _{ 14 } ^3)   } \cdot \frac{(r _{ 13 } ^3 - r _{ 23 } ^3) } { (r _{ 13 } ^3 - r _{ 12 } ^3)  } 
\]
is the product of three numbers greater than or equal to one. If  $ m _3 = m _1 $, it follows that
each of the fractions in the equation above is equal to one. From the first two fractions we obtain $ r _{ 14 } = r _{ 23 } $ and $ r _{ 12 } = r _{ 34 } $, respectively.  From the last two fractions we obtain $ r _{ 14 } = r _{ 34 } $ and $ r _{ 12 } = r _{ 23} $, respectively. This yields a rhombus. Moreover, comparing the two  equation \eqref{eqn:cc_n3} we find that $ m _2 = m _4 $. 

\

Second we consider  $  m _2 = m _4 $. From the proof of Theorem \ref{thm:masses}, we have that $ m _2 \geq m _4 $ and 
\begin{align*}
    \frac{ m _2 } {  m _4 }     =\left( \frac{ r _{ 12 } r _{ 23 } ^2 } { r _{ 34 } r _{ 14 } ^2 } \right)  ^2 \cdot  \frac{ r _{ 14 } } { r _{ 23 } } \cdot  \frac{(r _{ 13 } ^3 - r _{ 34 } ^3)}{(r _{ 13 } ^3 - r _{ 23 } ^3)} \cdot \frac{  (r _{ 24 } ^3 - r _{ 14 } ^3)    }{   (r _{ 24 } ^3 - r _{ 12 } ^3)    }.
\end{align*} 
This expression is  the product of four numbers greater than or equal to one. If we require $ m _2 = m _4 $, then from the second fraction we obtain $ r _{ 14 } = r _{ 23 }  $. Moreover, the first fraction now reduces to $ (r _{ 12 } / r _{ 34 } ) ^2 $, and thus we must have $ r _{ 12 } = r _{ 34 } $. The last two fractions 
 give $ r _{ 23 } = r _{ 34 } $ and $ r _{ 14 } = r _{ 12 } $, yielding again a  rhombus.

Third, suppose $ m _3 = m _4 $.  From equation \eqref{eqn:masses3}, we have that 
\[
    \frac{ m _3} { m _4 }  =  
    \frac{  r _{ 24 } ^{-3} - r _{ 14 } ^{-3} } { r _{ 13 } ^{-3} - r _{ 23 } ^{-3}} 
\]
If $ m _3 = m _4 $ this implies that $  r _{ 24 } ^{-3} - r _{ 14 } ^{-3} = r _{ 13 } ^{-3} - r _{ 23 } ^{-3} $. Consequently,
\[ r _{ 24 } ^{-3} - r _{ 13 } ^{-3} = r _{ 14 } ^{-3} - r _{ 23 } ^{-3}.\]
If $ r_{23} < r_{14} $, then  $r _{ 23 } ^{-3}>r _{ 14 } ^{-3} $ and  the right-hand side of equation above is negative, contradicting the fact that
$ r_{13}\geq r_{24} $. Hence, $ r _{ 23 } = r _{ 14 } $ and $ r _{ 13 } = r _{ 24 } $. Since a trapezoid with equal diagonals is an isosceles trapezoid, the configuration is an isosceles trapezoid. The equality of the masses $ m _1 = m _2 $ then follows from the first of equations \eqref{eqn:masses1}.  
This completes the proof. 

\end{proof}
\begin{remark} 
    Note that $ m _1 = m _2 $ does not imply that the configuration is an isosceles trapezoid. For instance,  a non-symmetric central configuration with $ m _1 = m _2 $ can be found numerically to be 
\begin{align*}    
      r_{13}&= 10.13318587483539368 , &   r_{24} & = 8.63262460668978253, \\
       r_{12}  & = 8 & r_{14} & = 7.59545875301365884, \\
      r_{23} & =7.03230033956929474, & r_{34} & = 4.37871386495945262.
  \end{align*}
  see Figure \eqref{fig:nonsymmetric}.

\end{remark} 
 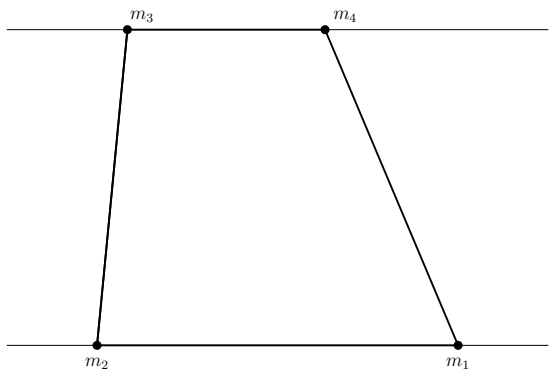
\begin{figure}
     \begin{center}
      \scalebox{.6}{
 \begin{tikzpicture}

\draw (-6,0) -- (6,0);
\draw (-6,7) -- (6,7);

\fill (4,0) circle (0.1cm);
\fill (-4,0) circle (0.1cm);
\fill (-3.3267,7) circle (0.1cm);
\fill (1.0519,7) circle (0.1cm);

\draw[very thick] (-4,0)--(4,0);
\draw[very thick] (-4,0)--(-3.32676299,7);
\draw[very thick] (-3.32676299,7)--(1.0519508707,7);
\draw[very thick] (-4,0)--(-3.32676299,7);
\draw[very thick] (1.0519508707,7)--(4,0);

\node[] at (4,-0.4) {$m _1 $};
\node[] at (-4,-0.4) {$m _2 $};
\node[] at (-3,7.3) {$m _3 $};
\node[] at (1.5,7.3) {$m _4 $};

\end{tikzpicture}
}\end{center}
\caption{An example of a non-symmetric trapezoidal   central configuration with $ m _1 = m _2 $.\label{fig:nonsymmetric}}
\end{figure}
 
\section*{Acknowledgments}
This work was supported by an NSERC discovery grant individual. The author is grateful to Shengda Hu for helpful discussions. 

\bibliographystyle{amsplain}
\bibliography{references}

\end{document}